\def\BibTeX{{\rm B\kern-.05em{\sc i\kern-.025em b}\kern-.08em
    T\kern-.1667em\lower.7ex\hbox{E}\kern-.125emX}}
\DeclareMathAlphabet{\mathppl}{T1}{ppl}{m}{it}
\DeclareMathAlphabet{\mathphv}{T1}{phv}{m}{it}
\DeclareMathAlphabet{\mathpzc}{T1}{pzc}{m}{it}
\DeclareMathOperator{\tr}{tr}
\newcommand{\Herm}[1]{{#1}^{\mathrm{H}}}
\newcommand{\Mt}[1]{\mathbf{#1}}
\newcommand{\Norm}[1]{\left\Vert #1 \right\Vert}
\newcommand{\Set}[1]{\mathcal{\uppercase{#1}}}
\newcommand{\Transp}[1]{{#1}^{\mathrm{T}}}
\newcommand{\Vt}[1]{\mathbf{\lowercase{#1}}}
\newcommand{\mtN}{\Mt{N}}
\newcommand{\mtY}{\Mt{Y}}
\newcommand{\vtH}{\Vt{H}}
\newcommand{\vtW}{\Vt{W}}
\newcommand{\vtY}{\Vt{Y}}
\newcommand{\stA}{\Set{A}}
\newcommand{\stC}{\Set{C}}
\newcommand{\stK}{\Set{K}}
\newcommand{\stM}{\Set{M}}
\newcommand{\stN}{\Set{N}}
\newcommand{\stP}{\Set{P}}
\newcommand{\stQ}{\Set{Q}}
\newcommand{\stU}{\Set{U}}
\newcommand{\stV}{\Set{V}}
\newcommand{\bbC}{\mathbb{C}}
\newcommand{\bbE}{\mathbb{E}}
\newtheorem{lem}{Lemma}
\begin{document}
\title{Clustering Algorithms for Multi-CPU Cell-Free Massive MIMO Systems with Mixed Coherent and Non-Coherent Transmission}
\title{Clustering Algorithms for Multi-CPU Cell-Free Massive MIMO Systems with Mixed Transmission}
\title{Mixed Coherent and Non-Coherent Transmission for Multi-CPU Cell-Free Massive MIMO Systems}
\title{{Mixed Coherent and Non-Coherent Transmission for Multi-CPU Cell-Free Systems}}

\author{
	Roberto P. Antonioli$^{{\star},*}$,
	Iran M. Braga Jr.$^{\star}$,	
	G\'{a}bor Fodor$^{\dag\ddag}$,
	Yuri C. B. Silva$^\star$,
	Walter C. Freitas Jr.$^\star$\\
	\small Wireless Telecom Research Group (GTEL)$^\star$, Federal University of Cear\'a, Fortaleza, Brazil.\\
	\small Instituto Atlântico$^*$, Fortaleza, Brazil.
	\small Ericsson Research$^\dag$ and Royal Institute of Technology$^\ddag$, Stockholm, Sweden. \\
	\small \{antonioli, iran, yuri, walter\}@gtel.ufc.br, roberto\_antonioli@atlantico.com.br, gabor.fodor@ericsson.com, gaborf@kth.se
}

\maketitle

\begin{abstract}
Existing works on cell-free {systems} consider either coherent or non-coherent downlink
data transmission and a network deployment with a single \ac{CPU}. 
While it is known that coherent transmission outperforms non-coherent transmission when {assuming} unlimited fronthaul links,
{the former} requires a perfect timing synchronization, which is {practically not viable} over a large network.
Furthermore, relying on a single \ac{CPU} for geographically large cell-free networks is not scalable.
Thus, {to realize the expected gains of cell-free systems in practice,}
alternative transmission strategies for realistic multi-\ac{CPU} cell-free systems are required.
{Therefore, this paper}
proposes a novel downlink data transmission scheme that combines and generalizes the existing coherent and non-coherent transmissions.
The proposed transmission scheme, named mixed transmission, works based on the realistic assumption that only the \acp{AP} controlled by {a same} \ac{CPU} are synchronized, {and thus transmit in a coherent fashion}, while \acp{AP} from different \acp{CPU} {require no synchronism and transmit in a non-coherent manner}.
We also propose extensions of existing clustering algorithms for multi-\ac{CPU} cell-free systems with mixed transmission.
{Simulation results} 
show that the combination of the proposed clustering algorithms
with mixed transmission have the potential to perform close to the ideal coherent transmission.
\end{abstract}

\begin{IEEEkeywords} 
	Cell-free, clustering algorithms, mixed transmission. 
\end{IEEEkeywords}

\acresetall 

\section{Introduction}

The cell-free \ac{MIMO} architecture was conceived with the goal of eliminating cell boundaries and providing uniformly great service to every user in the system.
In the canonical form of the cell-free architecture, the network deployment follows a star topology, in which many distributed \acp{AP} have independent fronthaul connections to a single \ac{CPU}.
However, {it is unlikely} that practical deployments of cell-free systems in geographically large networks
will rely on a single \ac{CPU}, since the idea that the entire network would act as an infinitely large single cell,
comprised by a single \ac{CPU} and all the \acp{AP}, is not scalable.
Moreover, the common assumption that the data from all \acp{AP} would be processed in a coherent manner is not scalable~\cite{Interdonato2019} either.
In realistic large cell-free scenarios, the most likely deployment will comprise multiple \acp{CPU},
and each \ac{CPU} controls a disjoint set of \acp{AP}~\cite{Interdonato2019,Chen2021}.

Both in the canonical and multi-\ac{CPU} cell-free cases, achieving a sufficiently accurate relative timing
and phase synchronization, such that all \acp{AP} can jointly exploit coherent signal processing, is not a simple task.
Indeed, the communication theory underlying coherent transmissions for cell-free systems assumes that
a perfect timing synchronization exists, which is physically impossible over a large network,
even if the clocks from the \acp{AP} are synchronized~\cite{Chen2021}.
Besides that, joint coherent transmission requires that the data to be transmitted to a certain user
be simultaneously available at multiple \acp{AP}~\cite{Checko2015,Interdonato2019ubiquitous,Vu2020}.
Such strict requirements might be difficult to achieve even in centralized solutions~\cite{Alexandropoulos2016,Checko2015},
and {therefore it deserves} further investigation in the cell-free context~\cite{Demir2021,Chen2021}.

The majority of works on cell-free \ac{MIMO} have considered coherent downlink data transmission
in scenarios with unlimited fronthaul, such as in~\cite{Ngo2018,Interdonato2019,Bjornson2020,Papazafeiropoulos2021,Chakraborty2021,Saraiva2022}.
Assuming non-coherent downlink data transmission, a resource allocation for cell-free \ac{MIMO} is proposed in~\cite{Ammar2021}.
Some papers compared coherent and non-coherent downlink data transmissions in cell-free systems {under the assumption} of unlimited fronthauls~\cite{Ozdogan2019,Zheng2021}, {and concluded} that coherent transmission always outperforms 
non-coherent transmission in such scenarios.
In~\cite{Antonioli2022}, the authors compared coherent and non-coherent transmission considering the realistic assumption 
of limited {fronthauls}, and showed that non-coherent transmission outperforms coherent transmission 
in scenarios with low fronthaul capacities.

{While} the studies considering coherent downlink transmission in cell-free \ac{MIMO} systems 
proposed solutions that are transparent to the underlying cell-free topology, 
they assumed perfect network-wide synchronization, which is physically impossible~\cite{Chen2021}.
{Therefore, such an idealized coherent transmission can be only seen as an upper bound 
for the performance that can be achieved {in practice} by cell-free systems.}
{Thus, alternative transmission strategies that do not rely on network-wide synchronization 
are desired \cite{Varatharaajan:22}.}

In this work, we consider a {practically feasible} scenario with 
multiple \acp{CPU} controlling disjoint clusters of \acp{AP}, in which the multiple \acp{CPU} are not perfectly synchronized, 
such that the \acp{AP} from different \acp{CPU} are not capable of serving a certain \ac{UE} using coherent transmission. 
We assume {accurate} synchronization and simultaneous data availability {only within} the \acp{AP} from each \ac{CPU}.
In this context, we go beyond the existing works and propose a novel transmission scheme 
that combines and generalizes the existing coherent and non-coherent transmissions.
When considering multi-\ac{CPU} cell-free systems and the proposed mixed transmission, some existing algorithms need to be redesigned.
Along this line, we extend three existing clustering algorithms.
Computational simulations considering multi-\ac{CPU} cell-free systems with correlated fading, 
multi-antenna \acp{AP} and pilot contamination analyze the performance of the proposed clustering solutions 
and indicate that {the mixed transmission performs close to the coherent transmission.}

\textit{Notation:} Throughout the paper, matrices and vectors are presented by boldface upper and lower case letters, respectively.
{$\Transp{(\cdot)}$ and $\Herm{(\cdot)}$} denote the transpose and conjugate transpose operations, respectively.
$\{x_i\}_{\forall i}$ denotes the set of elements $x_i$ for the values of $i$ denoted by the subscript expression.
$\mathbf{I}$ is the identity matrix.
The cardinality of a discrete set $\mathcal{X}$ is denoted by $|\mathcal{X}|$.
Expected value of a random variable is denoted by $\mathbb{E}[\cdot]$.

\section{System Model}
\label{Clustering:sec:sys_model}

We consider the downlink of a cell-free system comprised of $M$ \acp{AP}, each equipped with $N$ antennas, and $K$ single-antenna \acp{UE}.
Let $\stM$ and $\stK$ be the sets of \acp{AP} and \acp{UE}, respectively, where $|\stM| = M$ and $|\stK| = K$.
The \acp{AP} are connected via fronthaul links to \acp{CPU}, where the set of \acp{CPU} is denoted as $\stQ$, with $|\stQ| = Q$.
The subset of \acp{AP} connected to \ac{CPU} $q$ is denoted by $\stV_q \subset \stM$.
We consider a user-centric cell-free system in which each \ac{UE}~$k$ is served only by a subset of nearby \acp{AP}, denoted herein as $\stA_{k} \subset \stM$, where $|\stA_{k}| = A_k$.
The subset of \acp{UE} served by \ac{AP} $m$ is denoted by $\stU_m \subset \stK$.

In the existing literature, it is commonly assumed that the set of \acp{AP} $\stA_{k}$ {serves} each \ac{UE}~$k$ using coherent transmission, in which the same data symbols are transmitted by all \acp{AP} in $\stA_{k}$ to \ac{UE}~$k$~\cite{Bjornson2020}.
Alternatively, some studies considered that the \acp{AP} in $\stA_{k}$ serve \ac{UE}~$k$ by means of non-coherent transmission, where the \acp{AP} in $\stA_{k}$ transmit different data symbols to \ac{UE}~$k$~\cite{Ozdogan2019,Zheng2021}.
When non-coherent transmission is considered, each single-antenna \ac{UE}~$k$ decodes the data symbols from different \acp{AP} using \ac{SIC}.

Herein, we go beyond the existing models from the literature and {develop} a mixed coherent-and-non-coherent transmission, 
denoted hereafter as mixed transmission, in which the set $\stA_{k}$ is {partitioned} into sets of \acp{AP} $\stA_{k}^{c}$.
It is assumed that ${\stA_{k}^{c} \cap \stA_{k}^{c'} = \emptyset, \; c \neq c'}$ and $\bigcup\limits_{\forall c} \stA_{k}^{c} = \stA_{k}$.
The set of coherent groups serving \ac{UE}~k is denoted as $\stC_k$, where $|\stC_k|=C_k \leq A_k$.
All \acp{AP} within set $\stA_{k}^{c}$ are synchronized and serve \ac{UE} $k$ in a coherent fashion, thus sending the same data symbols to \ac{UE}~$k$.
Each set $\stA_{k}^{c}$ is herein referred to as a coherent group.
While the \acp{AP} within $\stA_{k}^{c}$ send the same data symbols to \ac{UE}~$k$, the symbols transmitted by the \acp{AP} in $\stA_{k}^{c'}$ are different from the ones transmitted by the \acp{AP} in~$\stA_{k}^{c}$.
In this way, the \acp{AP} from set $\stA_{k}^{c}$ do not need to be synchronized with the \acp{AP} from set $\stA_{k}^{c'}$.
Therefore, 
{transmissions from different coherent groups occur in a non-coherent fashion}, 
such that the data symbols transmitted by different coherent groups need to be decoded by the \acp{UE} using \ac{SIC}.

The \acp{AP} and users operate {a \ac{TDD} protocol}, 
consisting of {an uplink} pilot phase used for channel estimation, 
{followed by a downlink data transmission phase.}
The channels estimated in the pilot phase are used by the \acp{AP} for precoding in the downlink data transmission phase, which is a valid assumption that relies on the channel reciprocity provided by \ac{TDD}~\cite{Interdonato2019ubiquitous,Demir2021}.

We assume that the channel vector $\vtH_{m,k} \in \bbC^{N\times1}$ between user $k$ and \ac{AP} $m$ follows a standard block fading model.
Therefore, $\vtH_{m,k}$ is constant in time-frequency blocks of $\tau_c$ symbols, where $\tau_c$ is the length of the coherence block.
Independent channel realizations are drawn in each coherent block following a correlated Rayleigh distribution~\cite{Bjornson2020}:
\begin{equation}
	\label{clustering:eq:channel_dl_rayleigh}
	\vtH_{m,k} \sim \stN_\bbC(\Mt{0},\Mt{R}_{m,k}),	
\end{equation}
where $\Mt{R}_{m,k} \in \bbC^{N\times N}$ is the spatial correlation matrix of the channel between user $k$ and \ac{AP} $m$, and $\beta_{m,k} \triangleq \frac{\text{tr}(\Mt{R}_{m,k})}{N}$ is the large scale fading (LSF) coefficient, which includes pathloss and shadowing.

\subsection{Pilot Phase and Channel Estimation}
\label{Clustering:sec:channel_estimation}

We assume that $\tau_p$ mutually orthogonal pilot sequences $\bm{\varphi}_1, \ldots, \bm{\varphi}_{\tau_p} \in \bbC^{\tau_p}$ are used for channel estimation.
Each user is assigned a pilot sequence during the initial access procedure, where such an assignment is done in a deterministic but arbitrary way and $\lVert\bm{\varphi}_t\rVert^{2} = \tau_p$.
In practical scenarios, it commonly happens that there are more users than available orthogonal pilot sequences in the system, i.e, $K > \tau_p$.
Thus, multiple users will share the same pilot sequence, giving rise to pilot contamination.
This situation is modeled herein using the set $\stP_k \subset \stK$, which represents the subset of users sharing the same pilot assigned to user $k$, including itself.
Also, let $t_k \in \{1,\ldots,\tau_p\}$ denote the index of the pilot sequence assigned to user $k$.

The received pilot signal matrix $\mtY_{m}^{\text{p}}\in\bbC^{N\times \tau_p}$ at the \ac{AP} $m$ is given by
\begin{align}
	\label{clustering:eq:pilot_signal_at_ap}
	\mtY_{m}^{\text{p}} = \sum_{k=1}^{K} \sqrt{p_{k}^{\text{p}}}\vtH_{m,k}\Herm{\bm{\varphi}}_{t_k} + \mtN_{m}^{\text{p}},
\end{align}
where $p_{k}^{\text{p}} \geq 0$ is the pilot power of user~$k$ and $\mtN_{m}^{\text{p}}\in\bbC^{N\times \tau_p}$
is the received noise at \ac{AP} $m$ with independent $\stN_\stC(0,\sigma^2)$ entries, where $\sigma^2$ is the noise power.

The channel estimation is conducted at \ac{AP} $m$ by projecting $\mtY_{m}^{\text{p}}$ onto the associated normalized pilot signal $\bm{\varphi}_{t_k}/\sqrt{\tau_p}$,
which results in $\check{\vtY}_{m,t_k} = \frac{1}{\sqrt{\tau_p}}\mtY_{m}^{\text{p}}\bm{\varphi}_{t_k} \in\bbC^{N}$.
Then, the \ac{MMSE} estimate of $\vtH_{m,k}$ is~\cite{Bjornson2020}
\begin{align}
	\label{clustering:eq:est_channel}
	\hat{\vtH}_{m,k} = \sqrt{p_k^\text{p}\tau_p}\Mt{R}_{m,k}\Psi_{m,t_k}^{-1}\check{\vtY}_{m,t_k},	
\end{align}
where
\begin{equation}
	\label{clustering:eq:correl_matrix_ycheck}
	\Psi_{m,t_k} = \bbE\left\lbrace \check{\vtY}_{m,t_k} \Herm{\check{\vtY}_{m,t_k}}\right\rbrace = \sum_{i \in \stP_k}\tau_p p_i^\text{p} \Mt{R}_{m,i} + \sigma^2 \Mt{I}_N
\end{equation}
is the correlation matrix of $\check{\vtY}_{m,t_k}$.
The estimated $\hat{\vtH}_{m,k}$ and the channel estimation error $\tilde{\vtH}_{m,k} = \vtH_{m,k} - \hat{\vtH}_{m,k}$ are independent vectors distributed as $\hat{\vtH}_{m,k} \sim \stN_\bbC(\Mt{0},p_k^\text{p}\tau_p\Mt{R}_{m,k}\Psi_{m,t_k}^{-1}\Mt{R}_{m,k})$ and $\tilde{\vtH}_{m,k} \sim \stN_\bbC(\Mt{0},\Mt{C}_{m,k})$, in which $\Mt{C}_{m,k} = \bbE\left\lbrace \tilde{\vtH}_{m,k} \Herm{\tilde{\vtH}_{m,k}}\right\rbrace = \Mt{R}_{m,k} - p_k^\text{p}\tau_p\Mt{R}_{m,k}\Psi_{m,t_k}^{-1}\Mt{R}_{m,k}$.
%

\subsection{Mixed Downlink Data Transmission and Closed-Form SE}
\label{Clustering:sec:mixed_signal_model}

Since this work focuses on the downlink transmission, we consider that $\tau_d = \tau_c - \tau_p$ symbols are reserved for downlink transmission.
The channels locally estimated at the \acp{AP} based on the \ac{MMSE} estimator given in~\eqref{clustering:eq:est_channel} are treated as the true channels, and are used for \ac{MR} beamforming during the downlink transmission, which is given by~\cite{Bjornson2020,Demir2021}
\begin{equation}
	\label{clustering:eq:mrt}
	\vtW_{m,k}^{\text{MR}} = \sqrt{\rho_{m,k}} \frac{\hat{\vtH}_{m,k}}{\sqrt{ \bbE\left\{\Norm{\hat{\vtH}_{m,k}}^2\right\} }},
\end{equation}
in which $\hat{\vtH}_{m,k}$ is the \ac{MMSE} channel estimate from~\eqref{clustering:eq:est_channel} and $\rho_{m,k} \geq 0$ is the data power from \ac{AP} $m$ to user $k$, with $\sum_{k \in \stU_m} \rho_{m,k} \leq \text{P}_m^\text{max}$, where $\text{P}_m^\text{max}$ is the power budget of \ac{AP}~$m$.

In the mixed downlink transmission, each \ac{AP} within the coherent group $\stA_{k}^c$ coherently transmits the same data symbol to the intended user $k$, while the \acp{AP} from different coherent groups $\stA_{k}^{c'}, c \neq c'$, transmit different symbols in a non-coherent fashion.
The signal received by user $k$ is given by:
\begin{equation}
	\label{clustering:eq:mixed_rec_signal}
	y_k \hspace{-.1cm}= \hspace{-.1cm}\sum_{c \in \stC_k} \sum_{m \in \stA_k^c} \Herm{\vtH}_{m,k} \vtW_{m,k} s_k^c + \mathop{\sum_{i \neq k}}_{i \in \stK} \sum_{c \in \stC_i} \sum_{m \in \stA_i^c} \Herm{\vtH}_{m,k} \vtW_{m,i} s_i^c + z_k,
\end{equation}
where $s_{k}^c\in\bbC$, with $\bbE\{|s_{k}^c|^2\} = 1$, is the transmitted data symbol transmitted by the \acp{AP} from $\stA_{k}^c$ to \ac{UE}~$k$, $z_k \sim \stN_\bbC(0,\sigma_k^{2})$ is the additive white Gaussian noise  at \ac{UE}~$k$, and $\vtW_{m,k}$ is the downlink beamforming vector from \ac{AP} $m$ to \ac{UE}~$k$.

To obtain a closed-form expression of the downlink \ac{SE} lower bound, we assume that the \acp{AP} use \ac{MR} precoding from~\eqref{clustering:eq:mrt} with the \ac{MMSE} channel estimation from~\eqref{clustering:eq:est_channel}, as shown next.

\begin{lem}
	\label{Lemma:RateMixed}
	Based on the received signal model in~\eqref{clustering:eq:mixed_rec_signal}, using \ac{MR} precoder in~\eqref{clustering:eq:mrt} and the \ac{MMSE} channel estimation in~\eqref{clustering:eq:est_channel}, a closed-form \ac{SE} achieved by the coherent group $c$ when transmitting to user $k$ is
	\begin{equation}
		\label{clustering:eq:closed_form_rate_CGc}
		r_k^c = \frac{\tau_d}{\tau_c} \log_2\left(1+\gamma_k^c\right),
	\end{equation}
	where
	\begin{equation}
		\label{clustering:eq:closed_form_sinr_CGc_mixed}
		\gamma_k^c = \frac{ \textup{D}_k^c }{\textup{E}_k  + \textup{F}_k - \sum_{b = 1}^{c}\textup{D}_k^b + \sigma_k^2},
	\end{equation}
	\begin{align}
		& \textup{D}_k^c = \left| \sum_{m \in \stA_k^c} \sqrt{\rho_{m,k} p_k^\text{p}\tau_p\tr\left(\Mt{R}_{m,k}\Psi_{m,t_k}^{-1}\Mt{R}_{m,k}\right)} \right|^2, \\
		& \textup{E}_k = \sum_{i \in \stK} \sum_{b \in \stC_i} \sum_{m \in \stA_i^b} \rho_{m,i} \frac{  \tr\left( \Mt{R}_{m,k} \Mt{G}_{i,i} \right) } { \tr\left(\Mt{G}_{i,i}\right) } ,\\
		& \textup{F}_k = \sum_{i \in \stK} \sum_{b \in \stC_i} \left| \sum_{m \in \stA_i^b}  \sqrt{\rho_{m,i} p_k^\text{p} \tau_p} \frac{   \tr\left(  \Mt{G}_{i,k} \right)  } { \sqrt{\tr\left(\Mt{G}_{i,i}\right)} } \right|^2 \left(\frac{\Herm{\bm{\varphi}}_{t_k} \bm{\varphi}_{t_i}}{\tau_p}\right), \\
		& \Mt{G}_{i,k} = \Mt{R}_{m,i} \Psi_{m,t_k}^{-1} \Mt{R}_{m,k}.
	\end{align}	
	Moreover, the closed-form total \ac{SE} achieved by user $k$ for the mixed transmission strategy is given by $r_k = \sum_{c \in \stC_k} r_k^c$.
\end{lem}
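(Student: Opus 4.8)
The plan is to establish the expression via the ``use-and-then-forget'' (channel-hardening) bounding technique, combined with a successive-decoding argument across coherent groups. I would fix a \ac{UE}~$k$ and the decoding order $1,\dots,C_k$ of its coherent groups, and introduce the scalar effective channels $g_{k,i}^{b}\triangleq\sum_{m\in\stA_i^{b}}\Herm{\vtH}_{m,k}\vtW_{m,i}$, so that \eqref{clustering:eq:mixed_rec_signal} reads $y_k=\sum_{b\in\stC_k}g_{k,k}^{b}s_k^{b}+\sum_{i\neq k}\sum_{b\in\stC_i}g_{k,i}^{b}s_i^{b}+z_k$. When group~$c$ is decoded, the \ac{SIC} receiver has already cancelled the mean contributions $\bbE[g_{k,k}^{b}]s_k^{b}$ of the groups $b<c$ (only statistics being available, and assuming correct decoding of the earlier streams), so $s_k^{c}$ is detected over the equivalent channel $y_k^{c}=\bbE[g_{k,k}^{c}]s_k^{c}+n_k^{c}$, whose effective noise $n_k^{c}$ gathers: the desired-group gain fluctuation $(g_{k,k}^{c}-\bbE[g_{k,k}^{c}])s_k^{c}$; the not-yet-decoded groups $\sum_{b>c}g_{k,k}^{b}s_k^{b}$; the residual fluctuations $\sum_{b<c}(g_{k,k}^{b}-\bbE[g_{k,k}^{b}])s_k^{b}$ of the already-decoded groups; the inter-user terms $\sum_{i\neq k}\sum_{b}g_{k,i}^{b}s_i^{b}$; and $z_k$. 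Since the data symbols are unit-power, independent of one another and of the channels, $n_k^{c}$ is uncorrelated with $s_k^{c}$; the standard worst-case (Gaussian) uncorrelated-noise argument \cite{Bjornson2020,Demir2021} then gives the achievable \ac{SE} $r_k^{c}=\frac{\tau_d}{\tau_c}\LogTwo(1+\gamma_k^{c})$ with $\gamma_k^{c}=|\bbE[g_{k,k}^{c}]|^{2}/\bbE[|n_k^{c}|^{2}]$, and the mutual-information chain rule for this fixed order yields $r_k=\sum_{c\in\stC_k}r_k^{c}$.

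What remains is to evaluate the first and second moments of the $g_{k,i}^{b}$ under \ac{MR} precoding \eqref{clustering:eq:mrt} and \ac{MMSE} estimation \eqref{clustering:eq:est_channel}. I would use four facts: (i) estimates at distinct \acp{AP} are independent, so for $m\neq m'$ the terms $\Herm{\vtH}_{m,k}\vtW_{m,i}$ and $\Herm{\vtH}_{m',k}\vtW_{m',i}$ are uncorrelated; (ii) $\tilde{\vtH}_{m,k}$ is independent of every estimate formed from pilot $t_k$; (iii) when $i\in\stP_k$, $\hat{\vtH}_{m,i}$ and $\hat{\vtH}_{m,k}$ are deterministic linear images of the same projected observation $\check{\vtY}_{m,t_k}$, whereas when $i\notin\stP_k$, $\hat{\vtH}_{m,i}$ is independent of $\vtH_{m,k}$; and (iv) by pilot orthogonality $\Herm{\bm{\varphi}}_{t_k}\bm{\varphi}_{t_i}/\tau_p$ equals one if $i\in\stP_k$ and zero otherwise. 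With (i)--(iv), $\bbE[\Herm{\vtH}_{m,k}\vtW_{m,k}]=\sqrt{\rho_{m,k}p_k^{\text{p}}\tau_p\tr(\Mt{R}_{m,k}\Psi_{m,t_k}^{-1}\Mt{R}_{m,k})}$, so summing the independent per-\ac{AP} means inside $\stA_k^{c}$ and squaring gives $|\bbE[g_{k,k}^{c}]|^{2}=\textup{D}_k^{c}$; the per-\ac{AP} variance reduces to $\mathrm{Var}(\Herm{\vtH}_{m,k}\vtW_{m,i})=\rho_{m,i}\tr(\Mt{R}_{m,k}\Mt{G}_{i,i})/\tr(\Mt{G}_{i,i})$, whose sum over all $(i,b,m)$ is $\textup{E}_k$; and the coherent per-\ac{AP} means $\bbE[\Herm{\vtH}_{m,k}\vtW_{m,i}]$, which are nonzero only for $i\in\stP_k$, assemble inside each group into $\textup{F}_k$, the $i=k$ contribution to $\textup{F}_k$ being exactly $\sum_{b\in\stC_k}\textup{D}_k^{b}$.

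Assembling the pieces, fact~(i) gives $\bbE[|g_{k,i}^{b}|^{2}]=\sum_{m\in\stA_i^{b}}\mathrm{Var}(\Herm{\vtH}_{m,k}\vtW_{m,i})+|\bbE[g_{k,i}^{b}]|^{2}$, hence $\bbE[|g_{k,k}^{b}|^{2}]=\mathrm{Var}(g_{k,k}^{b})+\textup{D}_k^{b}$ with $\mathrm{Var}(g_{k,k}^{b})=\sum_{m\in\stA_k^{b}}\mathrm{Var}(\Herm{\vtH}_{m,k}\vtW_{m,k})$. Substituting these into $\bbE[|n_k^{c}|^{2}]$, the variance terms of all coherent groups of $k$ together with the inter-user variances collapse to $\textup{E}_k$; the inter-user coherent means give $\textup{F}_k-\sum_{b\in\stC_k}\textup{D}_k^{b}$; and the not-yet-decoded groups contribute $\sum_{b>c}\textup{D}_k^{b}$. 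Since $\sum_{b>c}\textup{D}_k^{b}-\sum_{b\in\stC_k}\textup{D}_k^{b}=-\sum_{b=1}^{c}\textup{D}_k^{b}$, the three $\textup{D}$-contributions telescope, so $\bbE[|n_k^{c}|^{2}]=\textup{E}_k+\textup{F}_k+\sigma_k^{2}-\sum_{b=1}^{c}\textup{D}_k^{b}$, which is the denominator of \eqref{clustering:eq:closed_form_sinr_CGc_mixed}; dividing by $\textup{D}_k^{c}$ yields $\gamma_k^{c}$ and hence \eqref{clustering:eq:closed_form_rate_CGc}.

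I expect the main obstacle to be fact~(iii): computing $\bbE[|\Herm{\vtH}_{m,k}\hat{\vtH}_{m,i}|^{2}]$ for pilot-sharing users, which needs the fourth-order moment identity for circularly symmetric complex Gaussian vectors and the key observation that, after normalization by $\bbE[\Norm{\hat{\vtH}_{m,i}}^{2}]$, the ``coherent'' part of this quantity equals $\rho_{m,i}p_k^{\text{p}}\tau_p|\tr(\Mt{G}_{i,k})|^{2}/\tr(\Mt{G}_{i,i})=|\bbE[\Herm{\vtH}_{m,k}\vtW_{m,i}]|^{2}$; it is precisely this ``second moment $=$ variance $+$ squared mean'' split that keeps $\textup{E}_k$ free of pilot-contamination cross terms and confines them to $\textup{F}_k$. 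The remaining bookkeeping --- the partition $\stA_k=\bigcup_{c}\stA_k^{c}$, the telescoping of the $\textup{D}_k^{b}$ sums, and identifying the $i=k$ part of $\textup{F}_k$ with $\sum_{b\in\stC_k}\textup{D}_k^{b}$ --- is routine.
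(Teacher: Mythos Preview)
Your proposal is correct and follows precisely the route the paper itself invokes: the paper's proof is a one-line reference to \cite[Section~6.2]{Demir2021}, i.e., the use-and-then-forget (hardening) bound combined with \ac{SIC} across coherent groups, which is exactly the machinery you lay out. Your decomposition of $\bbE[|n_k^c|^2]$ into the variance part $\textup{E}_k$, the coherent pilot-contamination part $\textup{F}_k$, and the telescoping $\textup{D}_k^b$ contributions is the right bookkeeping, and your identification of the fourth-order Gaussian moment for pilot-sharing users as the only nontrivial step matches what the cited reference handles.
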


\begin{proof} {This proof can be conducted following~\cite[Section 6.2]{Demir2021}}.
\end{proof}

The rate expression derived in Lemma~\ref{Lemma:RateMixed} for the proposed mixed transmission is a generalization of the rate expressions of the existing coherent and non-coherent transmissions since the proposed mixed transmission particularizes to those two cases.
Specifically, observing \eqref{clustering:eq:closed_form_rate_CGc}, one can see that the mixed transmission reduces to the coherent case when there is only one large coherent group serving each user $k$ in a coherent fashion, i.e., $|\stC_k| = C_k = 1$ and $|\stA_k^1| = A_k$.
{This scenario where all \acp{AP} from the unique coherent group $\stA_k^1, \forall k \in \stK$ 
serve each user $k \in \stK$ in a coherent fashion would only be possible 
in an ideal situation, where all \acp{CPU} are perfectly synchronized.} 
{Note that} the mixed transmission rate expression in \eqref{clustering:eq:closed_form_rate_CGc} reduces to the non-coherent transmission if we consider that all coherent groups serving each user $k$ have only one \ac{AP}, i.e., $|\stC_k| = C_k = A_k$ and $|\stA_k^c| = 1, \forall c \in \stC_k$.

\section{Clustering Algorithms}
\label{Clustering:SEC:ClusteringAlgs}

This section proposes extensions of existing clustering algorithms -- which have been designed {for} single \ac{CPU} systems -- 
to 
{multi-\acp{CPU} cell-free systems using mixed transmissions.}
We remark that the clustering algorithms proposed herein {form clusters by adding one \ac{UE} at a time as proposed in ~\cite{Demir2021}}.
{However, our proposed schemes}
control the number of coherent groups {using mixed transmissions 
by setting a control parameter} $n_{\mathrm{CPU}}$, as detailed next.

\subsection{{LSF Larger Than Threshold}}

The first proposed algorithm selects the \acp{AP} from the $n_{\mathrm{CPU}}$ best \acp{CPU} with LSF larger than a predefined threshold.
The \acp{CPU} are ordered based on the following rule: 
(1) for each $k$, we select the \ac{AP} from each \ac{CPU} $q$ with the largest LSF coefficient $\beta_{m,k}, m \in \stV_q$; 
(2) then, the \acp{CPU} are ordered (in descending order) based on the LSF values of their best \ac{AP} with respect to user $k$; 
(3) after ordering the \acp{CPU}, we select the $n_{\mathrm{CPU}}$ best \acp{CPU} and consider only the \acp{AP} from those $n_{\mathrm{CPU}}$ best \acp{CPU} for the cluster formation, i.e., only $m \in \bigcup\limits_{q = 1}^{n_{\mathrm{CPU}}} \stV_q$.
Finally, after ordering the \acp{CPU}, the \acp{AP} meeting $\beta_{m,k} \geq \Delta$ are selected to serve user $k$, where $\Delta$ is a threshold of minimum required large scale fading quality.
This clustering algorithm is referred to as \textit{LSF algorithm} hereafter.
We highlight that setting $n_{\mathrm{CPU}}$ to the number of \acp{CPU} in the system corresponds to reducing the proposed solution to the legacy solution from~\cite{Bjornson2011}, in which each user could be served by all \acp{AP} in the system.

\subsection{{Fixed Number of APs}}

The second proposed algorithm selects a fixed number of \acp{AP} $n_{AP}$ from the $n_{\mathrm{CPU}}$ best \acp{CPU}. First, the \acp{CPU} are ordered based on the same rule of the previous solution and the values of $\beta_{m,k} \in \bigcup\limits_{q = 1}^{n_{\mathrm{CPU}}} \stV_q$ are sorted in descending order.
After ordering the \acp{CPU}, we select the $n_{\mathrm{CPU}}$ best \acp{CPU} and consider only the \acp{AP} from those $n_{\mathrm{CPU}}$ best \acp{CPU} for the cluster formation, i.e., only $m \in \bigcup\limits_{q = 1}^{n_{\mathrm{CPU}}} \stV_q$.
Then, the $n_{AP}$ \acp{AP} with largest LSF coefficients are selected to compose the cluster serving each user $k$.
This clustering algorithm is referred to as \textit{Fixed algorithm} hereafter.
This proposed solution reduces to the solution from~\cite{Buzzi2017} when setting $n_{\mathrm{CPU}}$ to the number of \acp{CPU} in the system.

\subsection{{Fraction of Total Received Power}}

The third proposed clustering algorithm selects the \acp{AP} from the $n_{\mathrm{CPU}}$ best \acp{CPU} to make sure that:
\begin{equation}
	\frac{\sum_{m \in \stA_k} \beta_{m,k}}{\sum_{m \in \bigcup\limits_{q = 1}^{n_{\mathrm{CPU}}} \stV_q} \beta_{m,k}} \geq \delta,
\end{equation}
where the \acp{CPU} are ordered based on the same rule of the previous solution and the values of $\beta_{m,k} \in \bigcup\limits_{q = 1}^{n_{\mathrm{CPU}}} \stV_q$ are sorted in descending order.
We remark that after ordering the \acp{CPU}, we select the $n_{\mathrm{CPU}}$ best \acp{CPU} and consider only the \acp{AP} from those $n_{\mathrm{CPU}}$ best \acp{CPU} for the cluster formation, i.e., only $m \in \bigcup\limits_{q = 1}^{n_{\mathrm{CPU}}} \stV_q$.
Thus, the main idea of this algorithm is to select the \acp{AP} from the $n_{\mathrm{CPU}}$ best \acp{CPU} to serve user $k$ that have the strongest channels and that contribute to more than $\delta\%$ of the total power received by user $k$.
This clustering algorithm is referred to as \textit{Power algorithm} hereafter.
This proposed multi-\ac{CPU} clustering solution reduces to the solution in~\cite{Ngo2018} if we set $n_{\mathrm{CPU}}$ to the number of \acp{CPU} in the system.

\section{Numerical Results and Discussions}
\label{Clustering:sec:ResultsDiscussion}


\subsection{Parameters and Setup}
\label{Clustering:SUBSEC:ParSetup}

We consider the downlink of a cell-free system in which \acp{AP} and users are uniformly distributed in a $1 \times 1$ km square, while the \acp{CPU} are positioned at fixed locations ($[250, 250]$~m, $[250, -250]$~m, $[-250, -250]$~m and $[-250, 250]$~m).
In the considered scenarios, each \ac{AP} is controlled by the closest \ac{CPU}.
By means of a wrap-around technique, we imitate an infinitely large network.
In terms of pilot assignment, we adopt a simple algorithm where each user randomly selects a pilot from a predefined set of orthogonal pilots~\cite{Chien2018,Ngo2018}.
The coherence blocks have $\tau_c = 200$ samples and, unless otherwise {stated}, the number of orthogonal pilots is $\tau_p=10$.
The pilot powers  are $p_{k}^{\text{p}} = 0.2$ W, $\forall k$ and the data powers are $\rho_{m,k} = 0.1$ W, $\forall m,k$, which are not optimized in this paper.

The large-scale coefficients are modeled by path loss and correlated shadowing (in dB): $\beta_{m,k} = \text{PL}_{m,k} + \sigma_{\text{sh}}\varkappa_{m,k}$, where $\sigma_{\text{sh}}\varkappa_{m,k}$ is the shadow fading
with standard deviation $\sigma_{\text{sh}}$ and $\varkappa_{m,k}\sim\stN(0,1)$,
while $\text{PL}_{m,k}$ represents the path loss given by the three slope model from~\cite{Ngo2018}.
For the shadow fading coefficients, we consider the {spatially correlated} model with two components.
The spatial correlation is computed via the Gaussian local scattering model with 15$^\circ$ angular standard deviation~\cite{Demir2021}.
Moreover, the noise power is given by $ \sigma^2 = B \kappa_{\text{B}} T_{0} \sigma_{\text{F}}$, where $B = $ 20 MHz, $\sigma_{\text{F}} = $ 9 dB, $\kappa_{\text{B}} = 1.381 \times 10^{-23}$ Joule per Kelvin and $T_{0} = 290$ Kelvin.

\subsection{Results and Discussions}
\label{Clustering:SUBSEC:ResDisc}

In the first simulations, we consider the following setup $\{M, K, N, A_k\} = \{100, 20, 2, 20\}$, where the cluster of \acp{AP} serving each user $k$ is formed by the $A_k$ \acp{AP} with the largest values of $\beta_{m,k}$, i.e., we adopt a largest-large-scale-based selection~\cite{Buzzi2017}.
This is a legacy solution that does not consider the multiple \acp{CPU} aspect in the system.

Fig.~\ref{FIG:CDF_NC_C_M_100_APs} compares the performance of the mixed, coherent and non-coherent transmissions.
The non-coherent transmission obtains the lowest values of total rate, while the coherent transmission achieves the highest values.
Meanwhile, the proposed mixed transmission achieves a total rate in between the existing transmission schemes, while being closer to the ideal coherent transmission.
The mixed transmission outperforms the non-coherent transmission because of the coherent groups from each \ac{CPU} that are formed to coherently transmit data to the users, which enhance the desired signal compared to the non-coherent transmission, in which each \ac{AP} independently transmits data symbols to the users (as if the coherent groups had size of 1).
However, even though the formation of the coherent groups {enhances} the system performance 
when using the mixed transmission compared to the non-coherent transmission, 
{this} performance increase is not enough to reach the performance of the ideal coherent transmission.
This occurs because the mixed transmission has some degree of non-coherent transmission 
from the different coherent groups, which is known to have {inferior performance  to coherent transmission}.
Meanwhile, {for coherent transmission}, 
all \acp{AP}  must be perfectly synchronized to transmit data to the users.

\begin{figure}[!h]
	\centering
	\includegraphics[width=.8\columnwidth]{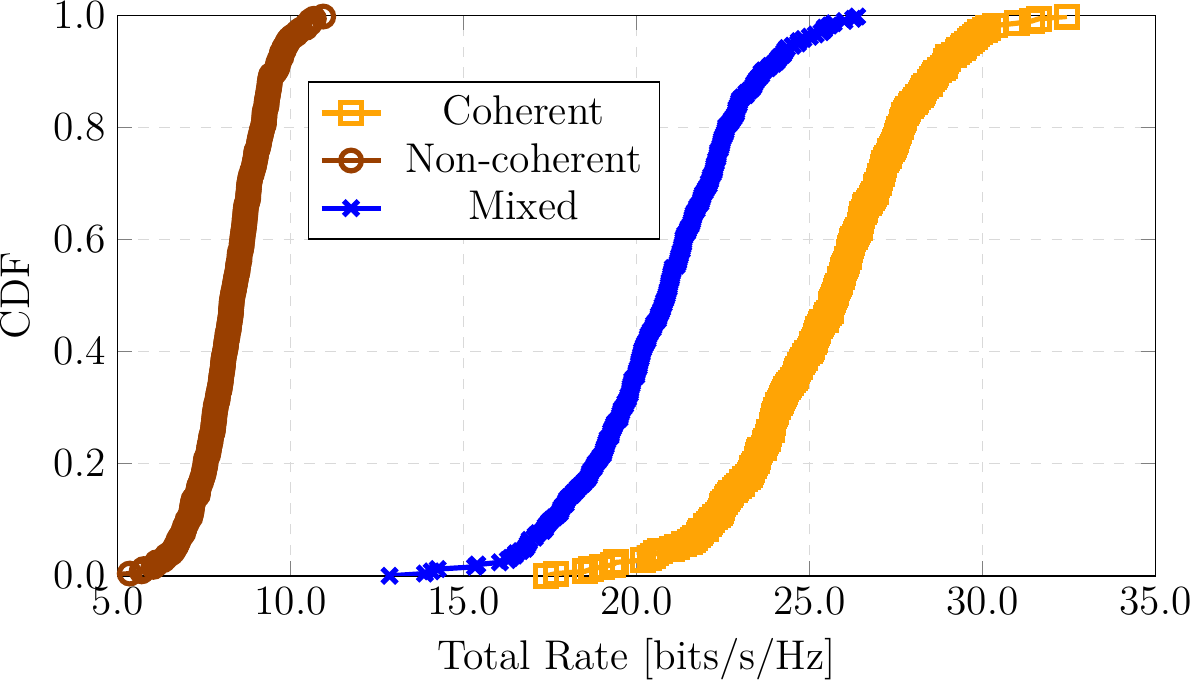}%
	\caption{CDF of total rate comparing the coherent, non-coherent and mixed transmissions.}
	\label{FIG:CDF_NC_C_M_100_APs}
\end{figure}

We now analyze the impact of $A_k$ while still using the largest-large-scale-based selection from~\cite{Buzzi2017} as clustering algorithm.
The parameters used in this {analysis} are $\{M, K, N\} = \{100, 20, 2\}$.

Fig.~\ref{FIG:NC_vs_C_vs_M_100APs_Varying_Ak} shows the total system rate obtained by coherent, non-coherent and mixed transmissions when the value of $A_k$ is varied.
When $A_k=1$, all three {transmission} schemes present the same performance because the rate expressions for all three schemes reduce to the same expression.
As $A_k$ increases, the rates achieved by the non-coherent transmission steadily decrease.
This occurs because of the increasing interference in the system, which increases faster than the desired signal strength due to the \ac{SIC} decoding.
On the other hand, both mixed and coherent transmissions increase the total rate up to a certain value of $A_k$, from which point the total rates achieved by these two schemes start decreasing.
The reason behind the increase in the total rate achieved by the mixed and coherent transmissions for the smaller values of $A_k$ is because they are able to increase the derived signal strength more than the interference increases.
This thinking is valid up to a certain value of $A_k$, from which the increase in the desired signal strength does not compensate the increase in the interference and pilot contamination, thus decreasing the total rate.

\begin{figure}[!h]
	\centering
	\includegraphics[width=.8\columnwidth]{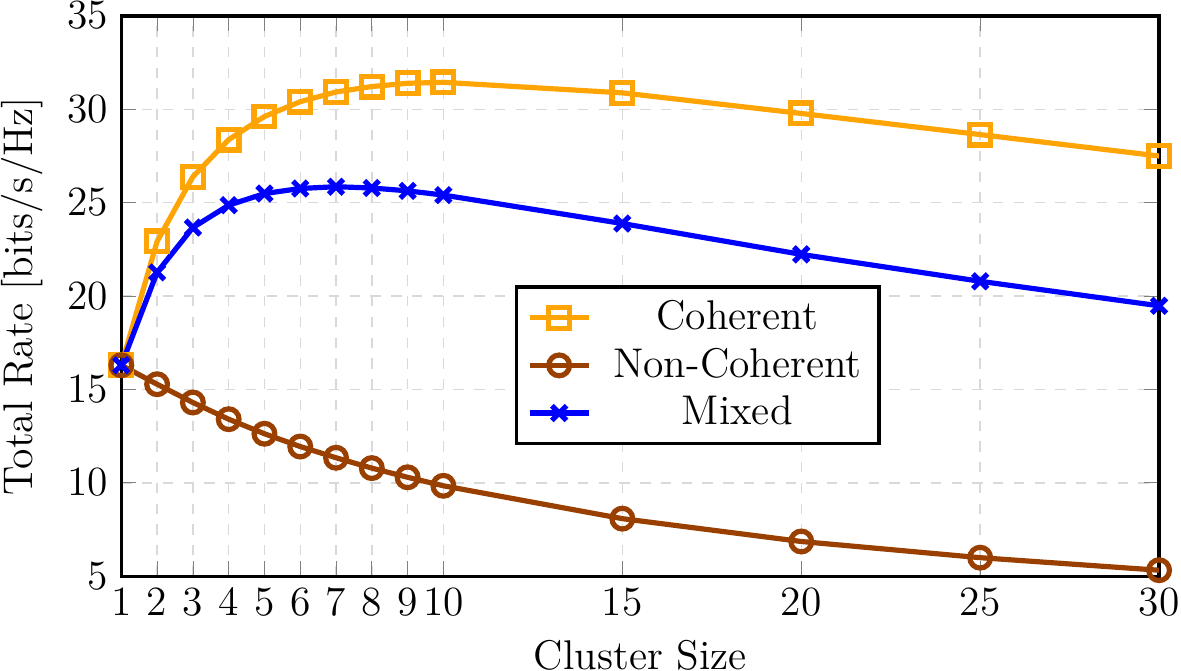}
	\caption{Total rate comparing coherent, non-coherent and mixed transmissions when varying~$A_k$.}
	\label{FIG:NC_vs_C_vs_M_100APs_Varying_Ak}
\end{figure}

The next figures analyze the performance of the coherent, non-coherent and mixed transmissions when adopting the clustering algorithms proposed in Section~\ref{Clustering:SEC:ClusteringAlgs}.
The parameters used for the clustering algorithms are: $n_{\mathrm{CPU}} = \left\{1, 2, 4\right\}$, $\Delta = \left\{23.5, 64.36, 266.06\right\}$, $\delta = \left\{0.85, 0.90, 0.95\right\}$ and $n_{\mathrm{AP}} = \left\{5, 10, 15\right\}$.
The system parameters used in this subsection are $\{K, N\} = \{20, 2\}$, while the number of \acp{AP} is varied.

Fig.~\ref{FIG:Sum_Rate_VaryAP_AlgEight} shows the total system rate obtained when adopting the Power algorithm. For the mixed transmission, we selected $\delta = 0.95$, which yields the best performance, while varying the parameter $n_{\mathrm{CPU}}$.
For the coherent transmission, the best performance occurs when $\delta = 0.95$ and $n_{\mathrm{CPU}} = 2$, which means that the \acp{AP} from the 2 best \acp{CPU} will transmit in a coherent fashion to the users.
Meanwhile, for the non-coherent transmission, the best performance occurs when $\delta = 0.85$ and $n_{\mathrm{CPU}} = 1$.
It is worth noting that none of the transmission strategies presented their best performances with $n_{\mathrm{CPU}} = 4$, i.e., when using the legacy solution, which shows the importance of multi-\ac{CPU}-aware clustering solutions.
Note that the non-coherent transmission presents its best performance when $\delta = 0.85$, i.e., when the cluster of \acp{AP} assumes the smallest size considering the analyzed parameters.
Meanwhile, both coherent and mixed transmissions present their best performance when creating a larger cluster of \acp{AP} in a controlled manner by setting $\delta = 0.95$.
Moreover, the mixed transmission strategy presented the best performance when the clustering algorithm creates only one large coherent group (i.e., when $n_{\mathrm{CPU}} = 1$), which indicates that the total system rate is not enhanced when adopting multiple non-coherent transmissions from different coherent groups.
Also, when $\delta = 0.95$ and $n_{\mathrm{CPU}} = 1$, the proposed practical mixed transmission strategy achieved a total rate very close to the ideal full coherent transmission.

\begin{figure}[!h]
	\centering
	\includegraphics[width=.8\columnwidth]{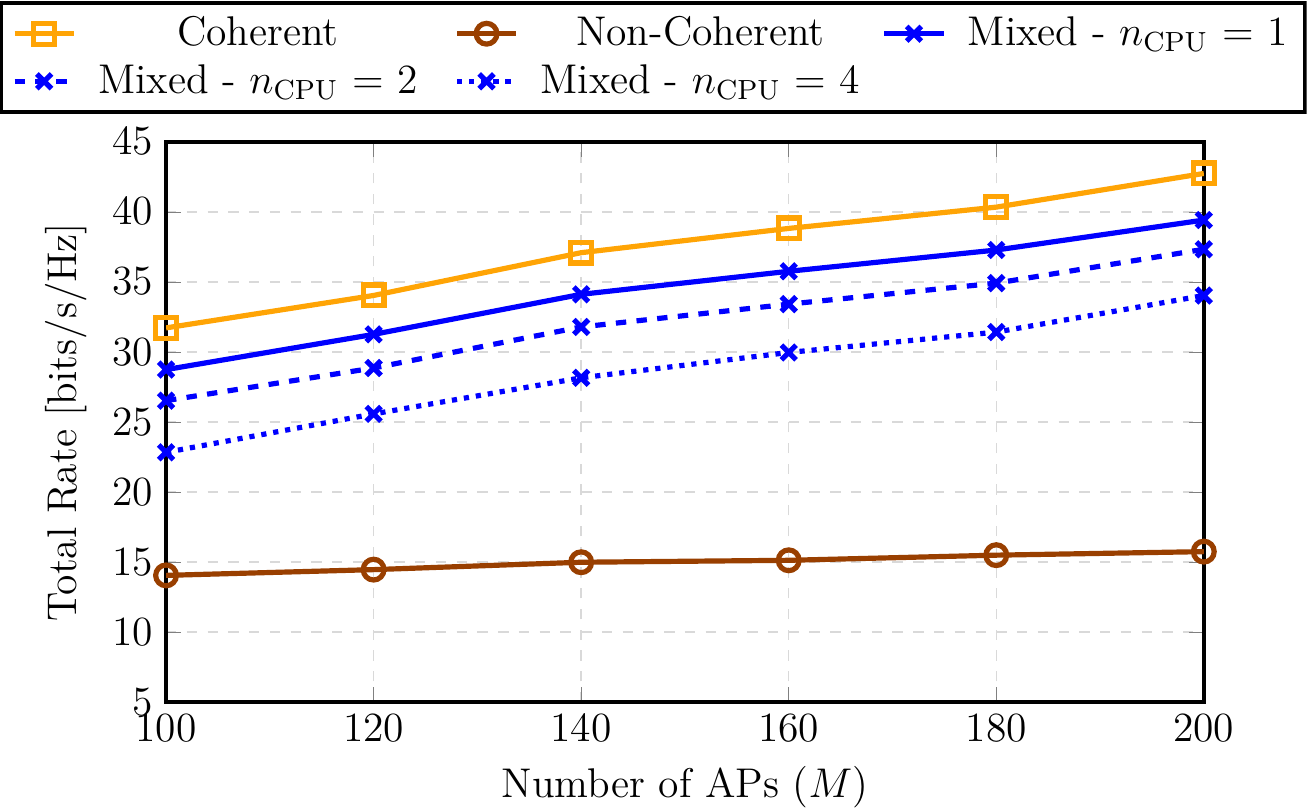}
	\caption{Impact of the number of different coherent groups on the performance of the mixed transmission with the \textit{Power algorithm}.}
	\label{FIG:Sum_Rate_VaryAP_AlgEight}
\end{figure}

Fig.~\ref{FIG:Sum_Rate_VaryAP_AlgNine} shows the total rate obtained when adopting the Fixed algorithm. For the mixed transmission, we selected  $n_{\mathrm{AP}} = 15$, which yields the best performance, while varying the parameter $n_{\mathrm{CPU}}$.
For the coherent transmission, the best performance occurs when $n_{\mathrm{AP}} = 15$ and $n_{\mathrm{CPU}} = 2$. Finally, for the non-coherent transmission, the best performance occurs when $n_{\mathrm{AP}} = 5$ and $n_{\mathrm{CPU}} = 1$.
Note again {that} the analyzed transmission schemes {do not reach} their best performance 
when $n_{\mathrm{CPU}} = 4$.
As observed previously, the non-coherent transmission {achieves} its highest performance when using the smallest cluster of \acp{AP}.
Meanwhile, the same conclusions for the mixed transmission discussed for Fig.~\ref{FIG:Sum_Rate_VaryAP_AlgEight} apply for Fig.~\ref{FIG:Sum_Rate_VaryAP_AlgNine}.

\begin{figure}[!h]
	\centering
	\includegraphics[width=.8\columnwidth]{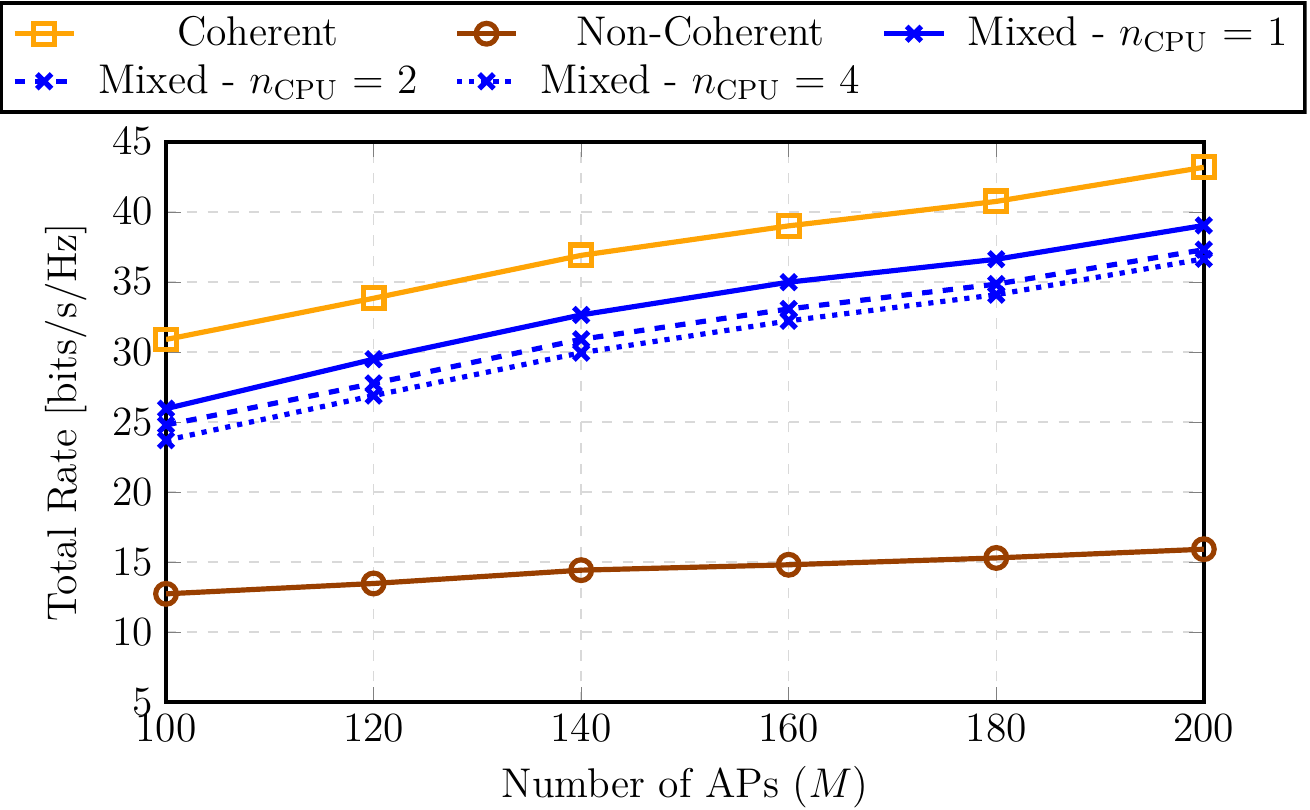}
	\caption{Impact of the number of different coherent groups on the performance of the mixed transmission with the \textit{Fixed algorithm}.}
	\label{FIG:Sum_Rate_VaryAP_AlgNine}
\end{figure}

Fig.~\ref{FIG:Sum_Rate_VaryAP_AlgTen} shows the total rate obtained when adopting the LSF algorithm. For the mixed transmission, we selected $\Delta = 23.5$, which yields the best performance, while varying the parameter $n_{\mathrm{CPU}}$.
For the coherent transmission, the best performance occurs when $\Delta = 23.5$ and $n_{\mathrm{CPU}} = 4$.
Meanwhile, for the non-coherent transmission, the best performance occurs when $\Delta = 266.06$ and $n_{\mathrm{CPU}} = 1$.
In particular, for the LSF algorithm, the best performance presented by the coherent transmission was achieved when $n_{\mathrm{CPU}} = 4$, i.e., when using the legacy solution from~\cite{Bjornson2011}.
The other conclusions from Fig.~\ref{FIG:Sum_Rate_VaryAP_AlgEight} and Fig.~\ref{FIG:Sum_Rate_VaryAP_AlgNine} apply to Fig.~\ref{FIG:Sum_Rate_VaryAP_AlgTen}.

\begin{figure}[!h]
	\centering
	\includegraphics[width=.8\columnwidth]{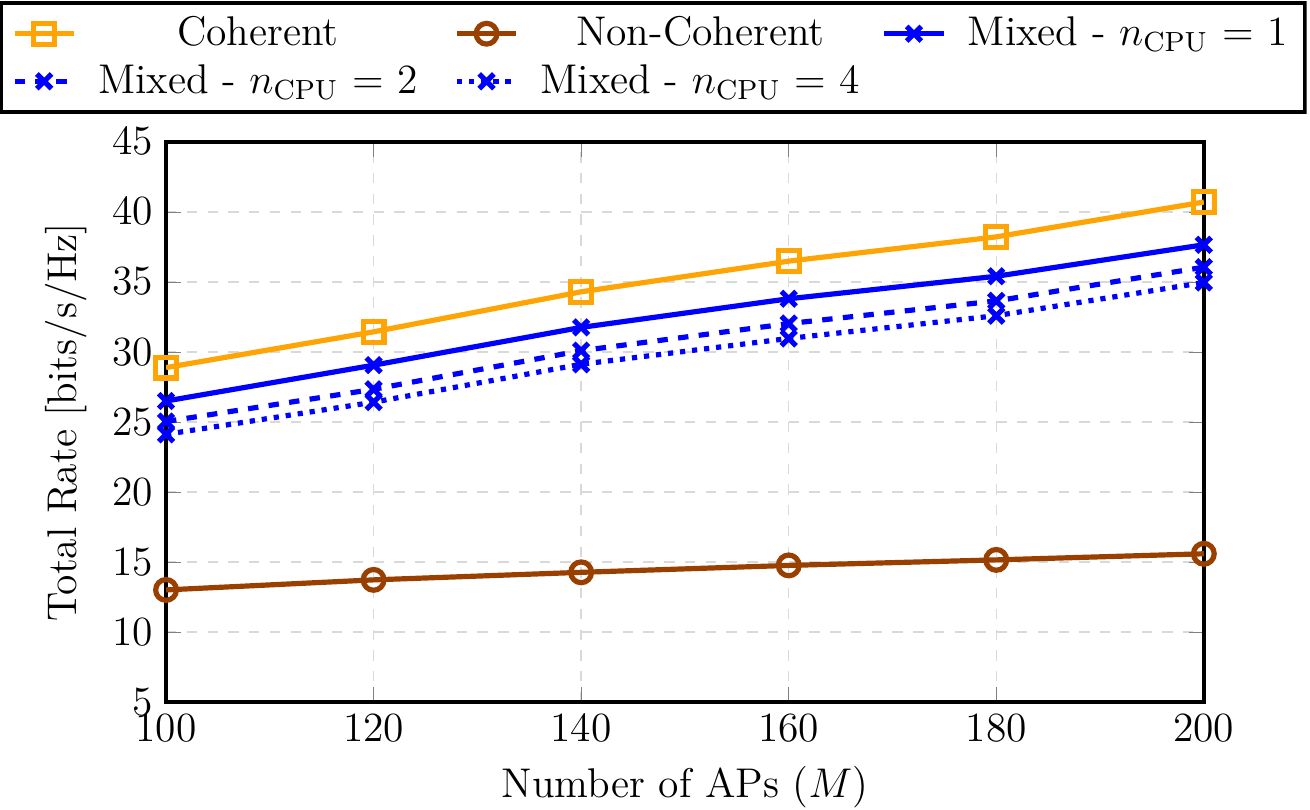}
	\caption{Impact of the number of different coherent groups on the performance of the mixed transmission with the \textit{LSF algorithm}.}
	\label{FIG:Sum_Rate_VaryAP_AlgTen}
\end{figure}

Fig.~\ref{FIG:Sum_Rate_VaryAP_AllAlgs} compares the best total rates achieved by the proposed multi-CPU-aware clustering solutions and transmission schemes.
The performance obtained by the Power and Fixed algorithms are very close for all transmission schemes, with a small advantage to the Power algorithm.
The small gain achieved by the Power algorithm comes from its smart \ac{AP} selection based on a fraction of the received power by each user.
Furthermore, the mixed transmission strategy using the Power algorithm presents almost the same performance of the coherent transmission using the LSF algorithm, showing the potential of the mixed transmission scheme and the importance of properly forming the cluster of \acp{AP} to serve the users.

\begin{figure}[!h]
	\centering
	\includegraphics[width=.8\columnwidth]{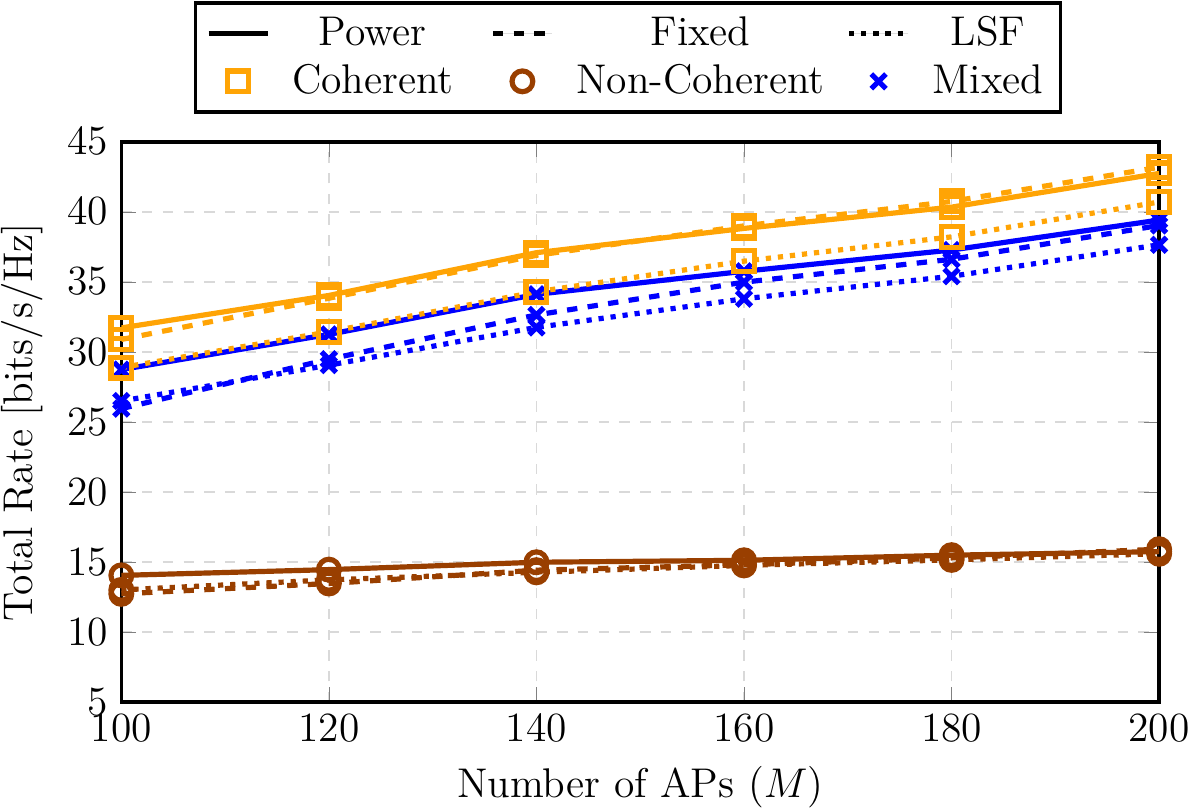}
	\caption{Comparison of the sum-rate achieved by all clustering algorithms considering their best performances for each transmission scheme.}
	\label{FIG:Sum_Rate_VaryAP_AllAlgs}
\end{figure}

\section{Conclusions}
\label{Clustering:sec:Conclusion}

In this work, we studied multi-\ac{CPU} cell-free systems with mixed transmission and the impact of the clustering algorithms in these systems.
We concluded that legacy clustering algorithms fail to exploit the benefits of the mixed transmission and that the design of multi-\ac{CPU}-aware clustering algorithms successfully improve the system performance.
More specifically, when using the proposed mixed transmission, higher total rates are obtained when forcing the users to be connected to only one \ac{CPU} and exploiting the coherent transmission potential from that \ac{CPU}.
Also, it was shown that the proposed practical mixed transmission achieves total rates very close to the ideal full coherent transmission, validating its practical {viability}.

\section{Acknowledgment}

This work was supported in part by Ericsson Research,
Technical Cooperation Contract UFC.48, in part by CNPq,
in part by FUNCAP, in part by CAPES/PRINT
Grant 88887.311965/2018-00, in part by CAPES - Finance
Code 001 and in part by the Celtic project 6G for
Connected Sky, Project ID: C2021/1-9.

\bibliographystyle{IEEEtran}
\bibliography{IEEEabrv,cfrefs}

\end{document}